\documentclass[12pt]{article}
\usepackage{amsmath, amsfonts, amssymb, amsthm, graphicx, geometry, arydshln, umoline, subfig, newtxtext, newtxmath, color, comment, soul, enumerate, bm}
\usepackage{appendix}
\usepackage{natbib}
\usepackage[colorlinks=true,citecolor=blue]{hyperref}
\usepackage{booktabs}
\usepackage[table]{xcolor}
\def\BibTeX{{\rm B\kern-.05em{\sc i\kern-.025em b}\kern-.08em
    T\kern-.1667em\lower.7ex\hbox{E}\kern-.125emX}}
\usepackage{makecell}
\usepackage{bbm}

\theoremstyle{definition}
\newtheorem{theorem}{Theorem}

\newtheorem{proposition}{Proposition}

\begin{document}

\title{Scoring Rules as Least-Squares Estimators}
\author{
Satoru Fujishige\thanks{Research Institute for Mathematical Sciences, Kyoto University, Kyoto 606-8502, Japan. Email: fujishig@kurims.kyoto-u.ac.jp}
\qquad
Satoshi Nakada\thanks{School of Management, Department of Business Economics, Tokyo University of Science, 1-11-2, Fujimi, Chiyoda-ku, Tokyo, 102-0071, Japan. Email: snakada@rs.tus.ac.jp}
}
\maketitle

\begin{abstract}
\cite{Kawada2018} proved that every scoring rule is equivalent to the corresponding cosine similarity rule.
The original proof relies on a direct analysis of the cosine similarity optimization problem.
In this note, we present an alternative, simpler proof based on a basic least-squares characterization.
Our argument shows that the arithmetic mean of the score vectors is the unique minimizer of the total squared Euclidean distance and that the cosine similarity formulation is an immediate consequence of this optimization property.
This result provides a transparent geometric interpretation of scoring rules and clarifies why the cosine similarity rule necessarily coincides with the corresponding scoring rule.
\\
\newline\noindent\textit{JEL classification}: D71.
\newline\noindent\textit{Keywords}: Scoring rules, Borda rule, Least-square, cosine similarity, optimization.
\end{abstract}

\section{Introduction}\label{sec: introduction}
Scoring rules constitute one of the most fundamental classes of voting rules.\footnote{See \citet{Fishburn1973} for a general introduction on the literature.}
Each alternative receives a score according to its position in every individual ranking, and the social ranking is determined by the aggregate scores.
Among them, the Borda rule is arguably the best-known example and has been studied extensively from axiomatic and geometric viewpoints \citep{Young1974, NitzanRubinstein1981, Saari1990, Saari1995}.

Recently, \cite{Kawada2018} introduced the cosine similarity rule and established a remarkable equivalence between the cosine similarity rule and the corresponding scoring rule.
In particular, the cosine similarity rule associated with the Borda score coincides exactly with the Borda rule.
This result reveals a close connection between scoring rules and cosine similarity.

The purpose of this note is to present another proof of Kawada's equivalence theorem.
Our proof is based on a simple least-squares characterization: the arithmetic mean uniquely minimizes the sum of squared Euclidean distances.
Although elementary, this characterization immediately reveals the structure underlying the cosine similarity rule.
Rather than manipulating the cosine similarity objective directly, we show that the optimizer of the cosine similarity problem is precisely the arithmetic mean of the score vectors, which is also the unique least-squares estimator of these vectors.
This observation clarifies why cosine similarity and score aggregation necessarily induce the same social ranking.

More broadly, optimization-based formulations have long played an important role in social choice.
Classic examples include the Kemeny--Young rule \citep{Kemeny1959,Young1988}, while recent work by \cite{ fujishige2026new} shows that similar optimization viewpoints can also be developed for the Schulze rule \citep{Schulze1997}. 
The present note shows that scoring rules themselves can be viewed as least-squares estimators, thereby providing a geometric explanation of Kawada's equivalence theorem.

The remainder of the paper is organized as follows.
Section \ref{sec: leastsquare} presents the least-squares characterization.
Section \ref{sec: scoring} reviews scoring rules and the cosine similarity rule, and provides the main equivalence result.
Section \ref{sec: conclude} is the concluding remark of this note.

\section{A Basic Least-Squares Characterization}\label{sec: leastsquare}

Throughout the paper, for any finite nonempty index set $X$, vectors in $\mathbb{R}^X$ are regarded as column vectors indexed by $X$, and $x^\top$ denotes the transpose of a vector $x\in \mathbb{R}^X$. 
When $X=\{1,\ldots,k\}$ for a positive integer $k$, we also write $\mathbb{R}^X$ as $\mathbb{R}^k$.

Let $N=\{1,\ldots,n\}$ be the set of voters and $V=\{a_1,\ldots,a_m\}$ be the finite set of alternatives.
Suppose that we are given vectors $p_i\in\mathbb{R}^V$ for each $i\in N$.
The proof of our main theorem relies on the following well-known least-squares characterization.
Consider the optimization problem
\begin{equation}
\min_{x\in\mathbb R^{V}} \Phi(x)=\sum_{i\in N}\|x-p_i\|^2,
\label{eq:Phi}
\end{equation}
where $||\cdot||$ denotes the Euclidean norm, i.e., 
$||z||^2=\sum_{j\in V}z_j^2$ for any $z\in\mathbb{R}^V$.

Note that $\Phi(x)$ can be written as 
\begin{align*}
\Phi(x)
&=
\sum_{i\in N}\left( \|x\|^2-2p_i^\top x+\|p_i\|^2 \right)\\
&=n\left( \|x\|^2-2\bar p^\top x \right)+\sum_{i\in N}\|p_i\|^2\\
&=n\|x-\bar p\|^2-n\|\bar p\|^2+\sum_{i\in N}\|p_i\|^2.
\end{align*}
Hence $\Phi(x)$ is minimized if and only if $\|x-\bar p\|^2$ is minimized, which is uniquely attained at $x=\bar p=(1/n)\sum_{i\in N}p_i$ since the squared Euclidean norm is strictly convex.
This elementary fact that the arithmetic mean uniquely minimizes the total squared Euclidean distance is summarized in the following proposition.

\begin{proposition}\label{prop: leastsquare}
The arithmetic mean $\bar p=(1/n)\sum_{i\in N}p_i$ is the unique solution of the optimization problem \eqref{eq:Phi}.
\end{proposition}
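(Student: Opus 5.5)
The plan is to complete the square in $\Phi$ and then read off uniqueness from the fact that a squared norm vanishes only at zero. The computation displayed just before Proposition~\ref{prop: leastsquare} already does most of the work.

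First, expand each summand as $\|x-p_i\|^2=\|x\|^2-2p_i^\top x+\|p_i\|^2$, using bilinearity and symmetry of the inner product $u^\top v$. Summing over $i\in N$ and writing $\sum_{i\in N}p_i=n\bar p$ gives
\begin{equation*}
\Phi(x)=n\|x\|^2-2n\,\bar p^\top x+\sum_{i\in N}\|p_i\|^2 .
\end{equation*}
Next, add and subtract $n\|\bar p\|^2$. Since $\|x-\bar p\|^2=\|x\|^2-2\bar p^\top x+\|\bar p\|^2$, this yields the identity
\begin{equation*}
\Phi(x)=n\|x-\bar p\|^2+C,\qquad C:=\sum_{i\in N}\|p_i\|^2-n\|\bar p\|^2 .
\end{equation*}
Here $C$ does not depend on $x$.

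To conclude, I will not appeal to strict convexity. I will argue directly: $n\ge 1$ and $\|x-\bar p\|^2\ge 0$, with equality iff $x=\bar p$, since $\|z\|^2=\sum_{j\in V}z_j^2=0$ forces every $z_j=0$. Hence $\Phi(x)\ge C=\Phi(\bar p)$ for all $x\in\mathbb R^V$, so $\bar p$ attains the minimum. Moreover, any $x\neq\bar p$ gives $\Phi(x)-\Phi(\bar p)=n\|x-\bar p\|^2>0$, so $\bar p$ is the unique minimizer.

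There is no real obstacle here. The only step needing care is the bookkeeping in the cross term, namely checking that $\sum_i p_i^\top x=n\bar p^\top x$ and that the constant $C$ is independent of $x$. Once the identity above is in place, existence and uniqueness are immediate.
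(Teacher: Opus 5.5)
Your proposal is correct and follows the paper's argument: complete the square to get $\Phi(x)=n\|x-\bar p\|^2+C$ with $C$ independent of $x$, then read off that the minimum is attained exactly at $x=\bar p$. The only difference is the last step, where you use directly that $\|z\|^2=0$ only for $z=0$ instead of the paper's appeal to strict convexity of the squared norm; this is, if anything, the more direct justification.
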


\section{Scoring Rules and Cosine Similarity}\label{sec: scoring}
\subsection{Scoring rules}
Each voter has a strict preference ordering (i.e, a complete, transitive, and asymmetric binary relation) over $V$.
Since every strict preference ordering uniquely determines its ranking (and vice versa), we identify a preference ordering with its associated ranking function.
Let $\mathcal{R}$ denote the set of all ranking functions $r: V\rightarrow\{1,\ldots,m\}$, and its subset $\mathcal{P} \subsetneq \mathcal{R}$ denote the set of all strict rankings, that is, $r(a) \neq r(b)$ for any $a, b \in V$ with $a \neq b$.
Here, $r(a)=1$ means that alternative $a$ is ranked first (most preferable) and $r(a)=m$ means that it is ranked last (least preferable).
The social welfare function $F: \mathcal{P}^n \rightarrow \mathcal{R}$ determines a social ranking depending on the ranking profiles $(r_i)_{i \in N}$ for voters.

Fix a non-zero score vector $\theta=(\theta_1,\ldots,\theta_m)\in\mathbb R^m$ satisfying $\theta_1\ge\theta_2\ge\cdots\ge\theta_m$.
For each voter $i\in N$ with ranking $r_i\in \mathcal{P}$, the corresponding score function $s^{\theta}_i:V \rightarrow \mathbb{R}$ is defined by
\[
s_i^\theta(a)=\theta_{r_i(a)} \qquad \forall a\in V.
\]
The aggregate score of an alternative $a$ is
\[
S^\theta(a)
=
\sum_{i\in N}
s_i^\theta(a).
\]
The scoring rule associated with $\theta$ is a social welfare function that orders alternatives according to these aggregate scores.
The induced social ranking $r^\theta \in \mathcal{R}$ satisfies
\begin{equation}
r^\theta(a)\le r^\theta(b)
\quad\Longleftrightarrow\quad
S^\theta(a)\ge S^\theta(b) \qquad \forall a,b\in V.
\label{eq: scoringrule}
\end{equation}
The most prominent example is the \emph{Borda rule}, which corresponds to the score vector $\theta_k=m+1-k$ for any $k=1,\ldots,m$.

\subsection{Cosine Similarity Rule}

We next recall the cosine similarity rule introduced by \cite{Kawada2018}.
For any non-zero vectors $x,y\in\mathbb R^{V}$, 
their cosine similarity is defined as
\[
C(x,y)=\frac{x^\top y}{\|x\|\|y\|}. 
\]
Consider the following optimization problem
\begin{equation}
\max_{x\in\mathbb{R}^V \setminus \{{\bf 0}\}}
\sum_{i\in N} C(s^{\theta}_i,x). \label{eq:cosproblem}
\end{equation}
Let $x^*$ be an optimal solution of \eqref{eq:cosproblem}.
The cosine similarity rule ranks alternatives according to the coordinates of $x^*$; that is,
\begin{equation}
r^{\mathrm{cos}, \theta}(a) \le r^{\mathrm{cos}, \theta}(b) \quad\Longleftrightarrow\quad x^*(a) \ge x^*(b)\qquad \forall a,b\in V.
\label{eq:cosrule}
\end{equation}

\cite{Kawada2018} establishes the following equivalence of scoring rules and cosine similarity rules.

\begin{theorem}[\citealt{Kawada2018}]\label{thm: main}
For every score vector $\theta$, the scoring rule associated with $\theta$, \eqref{eq: scoringrule}, coincides with the cosine similarity rule associated with $\theta$, \eqref{eq:cosrule}.
\end{theorem}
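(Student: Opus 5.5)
The plan is to show that any optimal solution $x^*$ of \eqref{eq:cosproblem} is a positive multiple of the arithmetic mean of suitably normalized score vectors. That mean is itself a positive multiple of the aggregate score vector $S^\theta$. Since the ranking \eqref{eq:cosrule} is invariant under positive scaling of $x^*$, it then coincides with \eqref{eq: scoringrule}.

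\textbf{Normalize and reduce to a linear objective.} Every $s_i^\theta$ is a permutation of the nonzero vector $\theta$, so all of them have the same norm $\|s_i^\theta\|=\|\theta\|>0$. Set $p_i=s_i^\theta/\|\theta\|$; these are unit vectors, and $\bar p=(1/n)\sum_{i\in N}p_i = S^\theta/(n\|\theta\|)$. Because $C(s_i^\theta,x)$ is invariant under positive scaling of $x$, we may restrict \eqref{eq:cosproblem} to unit vectors $\|x\|=1$. There the objective equals $\sum_i p_i^\top x = n\,\bar p^\top x$.

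\textbf{Link to Proposition~\ref{prop: leastsquare}.} For unit $x$ we have $\Phi(x)=\sum_i\|x-p_i\|^2 = 2n-2n\,\bar p^\top x$. Hence maximizing the cosine objective over the unit sphere is equivalent to minimizing $\Phi$ over the unit sphere. By the identity preceding Proposition~\ref{prop: leastsquare}, this is equivalent to minimizing $\|x-\bar p\|$ over the sphere. When $\bar p\neq \mathbf 0$, the closest point of the sphere to $\bar p$ is uniquely $\bar p/\|\bar p\|$: by Cauchy--Schwarz, $\bar p^\top x\le\|\bar p\|$, with equality iff $x=\bar p/\|\bar p\|$. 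So every optimal $x^*$ is a positive multiple of $\bar p$, hence of $S^\theta$. It follows that $x^*(a)\ge x^*(b)$ iff $S^\theta(a)\ge S^\theta(b)$, which is exactly the claim.

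\textbf{Main obstacle: the degenerate case $\bar p=\mathbf 0$.} This happens when $S^\theta\equiv 0$. It can occur, for instance, when $\theta$ sums to zero and the profile is balanced, or when all components of $\theta$ are equal and zero (excluded, since $\theta\neq 0$). In that case the objective is identically zero, so every nonzero $x$ is optimal and $x^*$ is not determined. I would handle this by adopting the natural convention that the cosine rule uses the optimizer singled out by the least-squares characterization, namely $\bar p$ itself. That choice gives the all-tie ranking, the same as the scoring rule. Equivalently, I would state the theorem for profiles with $S^\theta\neq 0$ and remark that the tie case is a matter of convention. Everything else is routine.
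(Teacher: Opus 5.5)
Your proof is correct and follows essentially the paper's route: you use the fact that every $s_i^\theta$ has norm $\|\theta\|$, together with scale invariance of $C$, to turn the cosine objective into a linear functional on a sphere, and you then identify its unique maximizer with the (normalized) mean via the least-squares identity (the paper works on the sphere of radius $\|\bar s^\theta\|$, where the unconstrained minimizer already lies, while you work on the unit sphere and conclude with Cauchy--Schwarz). Two points in your version go beyond the paper. You make explicit that every optimizer, not just one, is a positive multiple of $S^\theta$; and you treat the case $S^\theta=\mathbf 0$ (possible when $\sum_k\theta_k=0$, e.g.\ $\theta=(1,0,-1)$ with two voters holding reversed rankings), where the paper simply asserts $\rho=\|\bar s^\theta\|>0$ without justification.
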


We now prove Theorem \ref{thm: main} by combining the least-squares characterization by Proposition \ref{prop: leastsquare} with the observation that all score vectors have the same Euclidean norm.

\begin{proof}[Proof of Theorem \ref{thm: main}]
For each voter $i\in N$, let $p_i=s^{\theta}_i$.
By Proposition \ref{prop: leastsquare}, the arithmetic mean $\bar{s^{\theta}}=(1/n)\sum_{i\in N}s^{\theta}_i$ is the unique minimizer of
\[
\sum_{i\in N}\|x-s^{\theta}_i\|^2=n\|x\|^2-2n (\bar{s^{\theta}})^\top x+\sum_{i\in N}\|s^{\theta}_i\|^2.
\]
Hence, $\bar{s^{\theta}}$ is the unique maximizer of 
\[
n (\bar{s}^{\theta})^\top x=\left(\sum_{i\in N}s_i^{\theta}\right)^\top x,
\]
within the sphere $S_\rho=\{x\in\mathbb R^V \mid \|x\|=\rho\}$ where $\rho=\left\|\bar{s}^{\theta}\right\|(>0)$.
Therefore, we can conclude that minimizing the least-squares objective $\sum_{i\in N}\|x-s_i^{\theta}\|^2$ is equivalent to the following optimization problem, where the arithmetic mean of the score vector $\bar{s}^{\theta}$ is the solution.
\begin{equation}
\max_{x \in S_\rho} \left(\sum_{i\in N}s_i^{\theta}\right)^\top x.
\label{eq:optimization_linear}
\end{equation}

Note that 
\[
\|s_i^{\theta}\|=\left(  \sum_{k=1}^{m}\theta_k^2  \right)^{1/2}=:R,
\]
which is independent of the voter $i$.
Hence, for every $x\in S_\rho$,

\[
\begin{aligned}
\left( \sum_{i\in N} s_i^{\theta} \right)^\top x &= \sum_{i\in N}(s_i^{\theta})^\top x\\
&=\sum_{i\in N}\|s_i^{\theta}\|\,\|x\|\,C(s_i^{\theta},x)\\
&=R\rho\sum_{i\in N}C(s_i^{\theta},x).
\end{aligned}
\]
Since $R\rho$ is a positive constant, the optimization problem \eqref{eq:optimization_linear} is equivalent to the cosine similarity optimization problem \eqref{eq:cosproblem}.
Consequently, the arithmetic mean of the score vector $\bar{s}^{\theta}=(1/n)S^{\theta}$ is an optimal solution of \eqref{eq:cosproblem}.
Therefore, we have the coordinate-wise order
\[
r^{\mathrm{cos}, \theta}(a) \le r^{\mathrm{cos}, \theta}(b) \iff \bar{s}^{\theta}(a) \ge \bar{s}^{\theta}(b) \iff S^\theta(a) \ge S^\theta(b),\qquad \forall a,b\in V,
\]
which implies that the cosine similarity rule coincides with the scoring rule.
\end{proof}

\section{Concluding Remarks}\label{sec: conclude}
In this note, we have shown that the equivalence established by \cite{Kawada2018} follows naturally from a basic least-squares characterization of the arithmetic mean.
Our proof identifies the arithmetic mean of the score vectors as the central object underlying both scoring rules and cosine similarity rules.
From this viewpoint, the two formulations are equivalent optimization representations that share the arithmetic mean as their common optimizer.

This interpretation suggests several possible directions for future research.
For example, it would be interesting to investigate whether similar geometric arguments can be applied to weighted scoring rules, incomplete rankings, or other similarity measures.
More generally, the present approach may provide a useful framework for understanding optimization-based representations of social choice procedures.

\section*{Acknowledgements} 

S.~Fujishige's research was supported by JSPS KAKENHI Grant Number JP22K11922 and by JST ERATO Grant Number JPMJER2301, and also by the Research Institute for Mathematical Sciences, an International Joint Usage/Research Center located in Kyoto University.
S. Nakada's research was supported by JSPS KAKENHI Grant Number 25K16606 and 25K00618.

\bibliographystyle{ecta}
\bibliography{reference}

\end{document}